\newtheorem{theorem}{Theorem}[section]
\newtheorem{lemma}[theorem]{Lemma}
\theoremstyle{definition}
\newtheorem{definition}[theorem]{Definition}
\theoremstyle{remark}
\newtheorem{remark}{Remark}[section]
\numberwithin{equation}{section}
\begin{document}

	\setcounter{page}{1}
	
	
	\begin{center}
		{\large  \bf{On some properties of M-projective curvature tensor in spacetime of general relativity}}
		
		\bigskip
		
	Musavvir Ali,$^{1,a}$ 	Mohammad Salman,$^{1,b}$ Farook Rahaman$^{2,*}$  and Naeem Ahmad Pundeer,$^{2,c}$   \\

	$^{1}$ Department of Mathematics, 
	
	Aligarh Muslim University, Aligarh-202002, India.

	$^{2}$ Department of Mathematics, 

Jadavpur University, Kolkata-700032, West Bengal, India.

	\footnote{Email addresses:  $^{a}$\url{musavvir.alig@gmail.com}, $^{b}$\url{salman199114@gmail.com}, $^{*}$\url{farookrahaman@gmail.com}, $^{c}$\url{pundir.naeem@gmail.com}   }

	\end{center}
	
	\bigskip
	
	\begin{abstract}
\noindent In this paper, we investigate the connection between the M-projective curvature tensor and other tensors. Also, we obtain the divergence of M-projective curvature tensor. A symmetry of spacetime known as M-projective collineation has been presented, and it has been possible to determine the conditions under which the general relativity spacetimes can admit such collineations.

	\end{abstract}
	
	\noindent{\small {\bf  MSC2020:} 53C50; 53C25; 83C20; 83C50; 83C05.\\
		{\bf Keywords:} Curvature   symmetry,  M-projective curvature tensors,  electromagnetic fields, collineation. }

	\section{ Introduction}
	
A manifold is the foundation of the most recent theories of general relativity in mathematics. The Lorentzian manifold  \cite{Besse,Neill} structure is the most significant subclass of pseudo-Riemannian manifolds after Riemannian manifolds, and these have important applications in cosmology and general relativity \cite{K.L.DuggalandR.Sharma, hall, KarmerD, yanoK}. The curvature tensor is a crucial element of differential geometry. 	The study of its symmetries has gained a lot of attention recently. The potential infinite dimensionality of the related vector space of curvature inheritance makes it a difficult task. This study will make a contribution to the researchers by identifying further instances of these symmetries across a wide variety of spacetimes,  which can make the task easier.
\vspace{6pt}
\par

The curvature of spacetime and the dispersion of matter inside it are described by general relativity, which uses the field equations structure. What we experience as gravity is the interaction between matter and spacetime? Although the spacetime continuum idea already existed, general relativity allowed Einstein to explain gravity as a bending of spacetime geometry. Einstein developed a set of field equations to describe how gravity reacted to matter in spacetime.	The primary goal of the research is to design potentials that correspond to Einstein's field equations. Symmetries of spacetime  on the geometry are consistent with the dynamics of the selected distribution. Geometrical  symmetries of spacetime can be expressed by the vanishing of the Lie derivative of certain tensors along a vector filed, these vector field may be  null, spacelike, or timelike. A number of works, Katzin, Levine, and Davis proposed the concept of symmetries
in the general theory of relativity  
 \cite{Katzin4, Katzin, Katzin2}. In \cite{Zahsan2,Ahsan,ZAhsan3,Ahsan2,ZAhsan2,AhsanSA,Zahsan12}, Ahsan et al. carried out additional research on the general relativity's spacetime symmetries. Y. J. Suh and his collaborator \cite{PundeerSuh,SKshu2,Suh2,Suh,YJSuh2} have also studied as  further research in context of spacetime of general relativity and Kaehler manifold.
\vspace{6pt}

The notation is conventional, with $V_4$ denoting the (four-dimensional, linked, Hausdorff) spacetime manifold and $g$ denoting the Lorentzian  metric with the signature $(-, +,+,+)$. The Ricci tensor components are $R _{ij} = R^ h_ {ihj}$, the Ricci scalar is $R = g^{ij} R_ {ij}$, and the curvature tensor associated with $g$ is indicated in component form by $R^h_ {ijk}$.  The symbols $(,)$,  $\pounds_ \xi$,  and $(;)$  represents the partial derivatives, covariant derivatives and Lie derivatives, respectively. The traditional symmetrization and skew-symmetrization are indicated by round and square brackets, respectively. Finally, it is assumed that $V_4$ is not flat in the sense that the curvature tensor does not vanish across a non-empty open subset of $V_4$.
\vspace{6pt}
\par

	The interaction between the matter and free gravitational parts of the field is described by the Bianchi identities.  The construction of the gravitational potential is the main objective of all investigations which satisfies the Einstein field equations (EFEs, briefly) with a cosmological constant. There is an interaction of matter and gravitation through EFE which is given by  \cite{KarmerD}
	\begin{equation}\label{2.1}
		R_{ij} - \frac{R}{2} g_{ij} + \wedge g_{ij} =  k ~  T_{ij},
	\end{equation}
	
	\noindent  where  $ g_{ij}$,  $R_{ij}$, $T _{ij}$, $\wedge$ and  $k$  denote metric,  Ricci tensor, energy-momentum tensor, the cosmological term and   constant, respectively. The energy momentum tensor is described in the equation,
	\begin{equation}\label{2.2}
		T_{ij}=(\mu + p) u_{i}u_{j}+ p g_{ij},
	\end{equation} 
	here $p$  the isotropic pressure,    $\mu$  the energy density, and $u^{i}$ is the velocity vector field   of the flow satisfying  $g_{ij} u^{j} = u_{i}$  for all  $i$  and $u_{i}u^{i}= -1$. 
	\vspace{6pt}
	
	\par	
	
	\section{Preliminaries}
	
	The objective is accomplished when we put symmetry assumptions on the geometry that are compatible with the dynamics of the selected matter distribution. The following equation represents these geometrical symmetries \cite{yanoK} of spacetime,
	\begin{equation}\label{2.3}
		\pounds_\xi  A =2  \varOmega A
	\end{equation}
	where $ \pounds_\xi $ stands for the Lie derivative along the vector field $\xi^{i}$. The vector field $\xi^{i}$ may be time-like ($\xi^{i}\xi_{i}<0$), space-like ($\xi^{i}\xi_{i}>0$) or null ($\xi^{i}\xi_{i}=0$),  \textquoteleft A\textquoteright  ~~ denotes a geometrical/physical quantity and $\varOmega$  is a scalar function.
	\vspace{6pt}
	\par
	
	A simple example can be provided here as the metric inheritance symmetry or \cite{yanoK} the  conformal motion (Conf M) along a conformal Killing vector (CKV) $\xi^{i} $ for A = $g_{ij}$  in equation  \eqref{2.3}  
	\begin{equation}\label{2.4}
		\pounds_\xi g_{ij}   =  2 \varOmega g_{ij} 
	\end{equation} 
	
	\noindent where $\varOmega$ is a conformal function and  the   most primary symmetry on  $(V_4, g) $ is the  motion (M),  which is obtained by setting $ A = g_{ij} $ and $\varOmega = 0$ in equation \eqref{2.3} and $\xi^{i}$  is called  a Killing vector \cite{yanoK,AA Coley,Michalaski }.
	\vspace{6pt}
	\par

The literature on these collineations and geometrical symmetry is growing, with conclusions that are elegant. More than 30 different forms of collineations have been examined.  We will only discuss symmetry assumptions that are necessary for our research.
\vspace{6pt}
\par	
	
The concept of curvature inheritance was introduced by K.L. Duggal in 1992  \cite{Duggal}.  Curvature inheritance is the generalization of curvature collineation (CC),   which was  defined in 1969 by Katzin  \cite{Katzin}. Most recently, Ali et al. \cite {AliM,AliM3,AliM2,Sheikh} investigated symmetry inheritance for conharmonic curvature tensors and Ricci flat spacetimes.
	\begin{definition}
		\cite{Duggal} A spacetime $(V_4, ~g)$ admits curvature inheritance symmetry along a 	 smooth vector field $\xi^{i}$ such that 
		
		\begin{equation} \label{2.5}
			\pounds_\xi R^h_{ijk} = 2 \varOmega R^h_{ijk}, 
		\end{equation}
		where $\varOmega = \varOmega(x^{i})$ can be called as inheriting factor or inheritance function.
	\end{definition}
	
	\noindent If  $\varOmega = 0$, then $ 	\pounds_\xi R^h_{ijk} = 0$ and $\xi^{i}$ is said to follow a curvature symmetry \cite{Katzin4,Katzin,Katzin2}  on $V_4$ or simply to write it generate a curvature collineation (CC) .
	The CI equation \eqref{2.5}	 can be written in local expression
	\begin{equation}\label{2.6}
		R ^h_{ijk; l} \xi^l - R^l _{ijk} \xi^h_{;l} + R^h_{ljk} \xi ^l_{;i} +R^h_{ilk} \xi ^l_{;j} +R^h_{ijl} \xi ^l_{;k} = 2 \varOmega  R^h_{ijk}.
	\end{equation}
	
	\begin{definition}
		\cite{Duggal}	A spacetime $(V_4, ~g)$ admits Ricci inheritance (RI) symmetry along a 	 smooth vector field $\xi^{i}$ such that  
		
		\begin{equation} \label{2.7}
			\pounds_\xi R_{ij} = 2 \varOmega R_{ij}.
		\end{equation}
		
	\end{definition}
	\noindent Contraction of \eqref{2.5} implies to \eqref{2.7}. Thus, in general, every curvature inheritance is  Ricci inheritance symmetry (i.e., CI $\Rightarrow$ RI), but the converse may not hold. In particular, RI reduces to RC when $\varOmega =0$. Otherwise, for $\varOmega \neq 0$, it is called a proper RI.

\begin{definition}
A spacetime admits  motion/isometry along a vector field $\xi^{i}$  such that
	\begin{equation}\label{1.8}
	\pounds_\xi g_{ij}   =  0 = \xi_{i;j} + \xi_{j:i}. 
\end{equation} 	
Equation \eqref{1.8} is referred to as the Killing equation, and vector $\xi$ is referred to as the Killing vector field.
\end{definition}

\begin{definition}
	A spacetime  admits   conformal motion (Conf M) along a vector field $\xi^{i}$  such that
	\begin{equation}\label{1.9}
		\pounds_\xi g_{ij}   =  2 \psi g_{ij}.  
	\end{equation} 	
Vector $\xi$ is referred to as a conformal Killing vector field, and equation \eqref{1.9} is known as a conformal Killing equation.
\end{definition}
\noindent	The conformal curvature tensor $\cite{Weyl}$ expressed as follows 
	\begin{equation}\label{2.8}
		C^h_{ijk}= R^h_{ijk}+\frac{1}{2}(\delta^h_j R_{ik}-\delta^h_k R_{ij}+ R^h_j g_{ik}- R^h_k g_{ij})+\frac{R}{6}
		(g_{ij}\delta^h_k- g_{ik}\delta^h_j).
	\end{equation}
	\begin{definition}
	 A spacetime $(V_4, ~g)$ admits Weyl curvature collineation (WCC) along a smooth vector field $\xi^{i}$ such that
	
	\begin{equation} \label{1.13}
		\pounds_\xi C^h_{ijk} =0.
	\end{equation}
\end{definition}
\begin{definition}
	 A spacetime $(V_4, ~g)$ admits projective  curvature collineation  along a smooth vector field $\xi^{i}$ such that
	
	\begin{equation} \label{1.14}
		\pounds_\xi W^h_{ijk} =0,
	\end{equation}
\end{definition}
\noindent where $W^h_{ijk} = R^h_{ijk} - \frac{1}{3} (\delta ^h_{k} R_{ij}  - \delta ^h_{j} R_{ik})$ is the projective curvature tensor.

\begin{definition}
The symmetry property of spacetime determined by the electromagnetic field is
	\begin{equation} \label{1.15}
		\pounds_\xi F_{ij} =  F_{ij;k} \xi^{k} + F_{ik} \xi^k_{;j} + F_{jk} \xi^{k}_{;i} = 0
	\end{equation}
\end{definition}
\noindent where $F_{ij} $ is the electromagnetic field tensor.
\vspace{6pt}
\par

Motivated by the above works, we have  establish relationships among the divergences of the M-projective, projective, conformal, conharmonic, concircular and W-curvature tensors. Also,  we  introduce the M-projective collineation. We have proved the results for divergences of M-projective curvature tensor in Section 3. In Section 4, we will discuss the  study of M-projective curvature collineation and related results. An  overview of the study is provided in Section 5 to conclude.

	\section{\small Divergence of M-projective curvature tensor and its relationship with divergence of  other curvature tensors }

 In this section, we determine the relations between the divergence of the M-projective curvature tensor and some other curvature tensors. Also,  we will describe the M-projective curvature tensor in terms of conformal,   conharmonic, concircular, projective   and W-curvature tensors.
 \vspace{6pt}
 \par
	
The properties of an M-projective curvature tensor were proposed by Pokhariyal and Mishra  \cite{GPPokh} in 1970. This tensor is described as follows
	\begin{eqnarray}\label{2.1n}
\bar{W} (X,Y,Z,T) = \bar{R} (X,Y,Z,T) - \frac{1}{2(n-1)} [S(Y,Z) g(X,T) - S(X,Z) g(Y,T)\notag\\ + g(Y,Z) S(X,T)  - g(X,Z) S(Y,T)] ,
\end{eqnarray}
where   $ \bar{W} (X,Y,Z,T) = g(W(X,Y)Z, T)$ and  $ \bar{R} (X,Y,Z,T) = g(R(X,Y)Z, T)$.  The local coordinates expression of equation \eqref{2.1n} are as follows
\begin{eqnarray}\label{2.2n}
	\bar{W}_{ijkl} = R_{ijkl}- \frac{1}{2(n-1)} [R_{jk} g_{il} - R_{ik} g_{jl}  + g_{jk} R_{il}  - g_{ik} R_{jl}].
\end{eqnarray}
Where $S(X,Y)$ is the global form of the Ricci tensor.The following characteristics hold for this tensor,
\begin{equation}\label{2.3n}
	\bar{W}_{ijkl} = - \bar{W}_{jikl}, ~~  \bar{W}_{ijkl} = - \bar{W}_{ijlk}, ~~~ \bar{W}_{ijkl} =  \bar{W}_{klij}, 
\end{equation}

\begin{equation}\label{2.4n}
	\bar{W}_{ijkl} + \bar{W}_{jkil}  + \bar{W}_{kijl} = 0.
\end{equation}
Assuming $n = 4$ in equation \eqref{2.2n} and contracting with $g^{ih}$, the M-projective curvature tensor for the spacetime of general relativity is given by
\begin{eqnarray}\label{2.5n}
	\bar{W}^h_{jkl} = R^h_{jkl} - \frac{1}{6} [R_{jk} \delta^h_{l} - R_{jl} \delta^h_{k} + g_{jk} R^h_{l}  - g_{jl} R^h_{k}].
\end{eqnarray}

	\begin{definition}
	A Riemannian space is  M-projectively flat  $\cite{Weyl}$ iff 
	
	\begin{equation}\label{2.9}
		\bar{W}^h_{ijk}=0.  ~~~~  
	\end{equation}
\end{definition} 
\begin{definition}
	An   Einstein spacetime defined as follows 
	\begin{equation}
		R_{ij} = \mu g_{ij},
	\end{equation}
	where $\mu = \frac{R}{4}$
	is a scalar and R is the scalar curvature. If the Ricci tensor is simply proportional to the metric tensor, we say that  this spacetime as Einstein-like spacetime.  
\end{definition} 
\par U. C. De and S. Mallick have stated in \cite{De} \textquotedblleft An M-projectively flat space of dimension four is an  Einstein spacetime and of constant curvature." The spaces of constant curvature play a significant role in cosmology.  The simplest cosmological model of the universe is obtained by making the assumption that the universe is isotropic and homogeneous. This is
known as cosmological principle. By isotropy we mean that all spatial directions are equivalent, while homogeneity means that it is impossible to distinguish one place in the universe from the other. That is, in the rest system of matter there is no preferred
point and no preferred direction, the three dimensional space being constituted in the same way everywhere. This cosmological principle, when translated into the language of Riemannian geometry, asserts that the three dimensional position space is a space of maximal symmetry. A space is of maximal symmetry if it has maximum 
number of Killing vector fields; and the maximum number of Killing vector fields in a Riemannian space of dimension n is $\frac{n (n+1)}{2}$, that is, a space of constant curvature whose curvature depends upon time.
\vskip 6pt

\par
Many authors studied the properties and applications of M-projective curvature tensor, especially by Ojha \cite{RH1,RH2} for Sasakian manifold, while for generalized Sasakian space form Venkatesha et al. \cite{Venkatesha} have studied this tensor. The role of M-projective curvature tensor in the study of $(LCS)_{2n+1}$ manifolds has been investigated by Shanmukha and Sumangala \cite{Shanmukha} while LP-Sasakian Manifolds satisfying the condition $\bar{W}^h_{ijk}$ = 0 have been considered by Zengin \cite{Zengin2}. Most recently, Kishor et al. \cite{Shyam} have studied specific results on M-projective curvature tensor $(\kappa, \mu)$- contact space forms. Chaubey and Ojha \cite{SKRH} have studied the M-projective curvature tensor for the Kenmotsu manifold. Also, lots of authors have studied the M-projective curvature tensor in the context of general relativity \cite{De, Debnath, Debnat2, Zengin1}.
\begin{theorem}
For spacetime $(V_4, g)$ an M-projective curvature tensor is divergence-free if and only if the Ricci tensor is of Codazzi type.
\end{theorem}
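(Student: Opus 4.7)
The plan is to compute $\nabla_h \bar{W}^h_{jkl}$ directly from the local expression \eqref{2.5n} and reduce it to a combination of the Codazzi-type defect $\nabla_k R_{jl} - \nabla_l R_{jk}$ and the scalar-curvature gradient $\nabla_i R$. First I would differentiate \eqref{2.5n} with respect to the contravariant index $h$, using $\nabla_h \delta^{h}_{l} = 0$ and the metric compatibility $\nabla_h g_{ij} = 0$ to pass the divergence operator through the Kronecker symbols and the metric. This produces the five terms $\nabla_h R^{h}{}_{jkl}$, $\nabla_l R_{jk}$, $\nabla_k R_{jl}$, $g_{jk}\nabla_h R^{h}{}_{l}$ and $g_{jl}\nabla_h R^{h}{}_{k}$.

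Next I would invoke two classical consequences of the second Bianchi identity: the once-contracted form $\nabla_h R^{h}{}_{jkl} = \nabla_k R_{jl} - \nabla_l R_{jk}$, and the twice-contracted form $\nabla_h R^{h}{}_{i} = \tfrac{1}{2}\nabla_i R$. Substituting these, the divergence collapses to an expression of the schematic form
\begin{equation*}
\nabla_h \bar{W}^{h}{}_{jkl} \;=\; \tfrac{7}{6}\bigl(\nabla_k R_{jl} - \nabla_l R_{jk}\bigr) \;-\; \tfrac{1}{12}\bigl(g_{jk}\nabla_l R - g_{jl}\nabla_k R\bigr),
\end{equation*}
where the coefficients are the only thing that really needs to be tracked carefully.

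For the \emph{if} direction I assume $R_{ij}$ is of Codazzi type, i.e.\ $\nabla_k R_{jl} = \nabla_l R_{jk}$. Contracting this with $g^{jl}$ and using the contracted Bianchi identity gives $\nabla_k R = \tfrac{1}{2}\nabla_k R$, so $\nabla_k R = 0$; then both summands in the displayed formula vanish, proving $\nabla_h \bar{W}^{h}{}_{jkl} = 0$. For the \emph{only if} direction I set the divergence to zero and contract the resulting identity with $g^{jl}$; the $g_{jl}g^{jl} = 4$ together with the contracted Bianchi identity yields a nonzero multiple of $\nabla_k R$ equal to zero, forcing $R$ to be constant, which in turn makes the scalar-gradient piece vanish and leaves precisely the Codazzi condition $\nabla_k R_{jl} = \nabla_l R_{jk}$.

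The main (and essentially only) obstacle is bookkeeping: making sure the $1/6$ from \eqref{2.5n}, the signs induced by the Bianchi contractions, and the factor of $4 = \dim V_4$ combine so that the coefficient in front of $\nabla_k R$ after the final contraction is nonzero. Once that is verified no further geometric input is needed, since the argument is purely a tensorial computation built on the Bianchi identities.
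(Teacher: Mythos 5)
Your proposal follows essentially the same route as the paper: expand $\nabla_h \bar{W}^h{}_{jkl}$ using the once- and twice-contracted Bianchi identities, reduce it to a Codazzi defect plus a scalar-curvature-gradient term, and in each direction contract with the metric to force $R$ to be constant, after which the remaining term gives the equivalence. The only discrepancy is your coefficient $\tfrac{7}{6}$ versus the paper's $\tfrac{5}{6}$, which stems from taking $\nabla_h R^h{}_{jkl} = \nabla_k R_{jl} - \nabla_l R_{jk}$ with the sign opposite to the paper's convention in \eqref{2.7n}; since the coefficient is nonzero either way, the argument is unaffected.
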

\begin{proof}
The M-projective curvature tensor for the spacetime of general relativity is given by
\begin{eqnarray}\label{2.5n1}
	\bar{W}^h_{jkl} = R^h_{jkl} - \frac{1}{6} [R_{jk} \delta^h_{l} - R_{jl} \delta^h_{k} + g_{jk} R^h_{l}  - g_{jl} R^h_{k}].
\end{eqnarray}	
Taking covariant derivative of \eqref{2.5n1}, we get
\begin{eqnarray}\label{2.5n2}
	\bar{W}^h_{jkl;h} = R^h_{jkl;h} - \frac{1}{6} [R_{jk;h} \delta^h_{l} - R_{jl;h} \delta^h_{k} + g_{jk} R^h_{l;h}  - g_{jl} R^h_{k;h}],
\end{eqnarray}
where semicolon $``(;)"$ denotes covariant differentiation.

\noindent We know that
\begin{equation}\label{2.7n}
	R^h_{jkl;h} =     R_{jk;l} - R_{jl;k}
\end{equation}
and 
\begin{equation}\label{2.7n1}
	R^h_{l;h} =    \frac{1}{2} R_{,l},
\end{equation}
where comma  “(,)” denotes partial differentiation. Using \eqref{2.7n} and \eqref{2.7n1} in \eqref{2.5n2}, we have
\begin{eqnarray}\label{2.5n3}
	\bar{W}^h_{jkl;h} = \frac{5}{6}  [R_{jk;l}  - R_{jl;k}] -  \frac{1}{12} [g_{jk} R_{,l}  - g_{jl} R_{,k}].
\end{eqnarray}
If $\bar{W}^h_{jkl;h}$ is divergence free, then the equation \eqref{2.5n3} becomes
\begin{eqnarray}\label{2.5n4}
	 5[R_{jk;l}  - R_{jl;k}] = \frac{1}{2} [g_{jk} R_{,l}  - g_{jl} R_{,k}].
\end{eqnarray}
Multiplying \eqref{2.5n4} by $g^{jk}$, we obtain
\begin{eqnarray}\label{2.5n5}
	5[R_{,l}  - R^k_{l;k}] = \frac{3}{2} R_{,l}.
\end{eqnarray}
Since $R^k_{l;k} = \frac{1}{2} R_{,l},$ the above equation implies $R$ = constant.

\noindent Hence from \eqref{2.5n4}, we get
\begin{eqnarray}\label{2.5n6}
R_{jk;l}  - R_{jl;k} = 0.
\end{eqnarray}
This implies that the Ricci tensor is of Codazzi type \cite{Derdz}.

\noindent Conversely, if the Ricci tensor is of Codazzi type, then
\begin{eqnarray}\label{2.5n7}
	R_{jk;l}  - R_{jl;k} = 0.
\end{eqnarray}
Multiplying \eqref{2.5n7} by $g^{jk}$,  we have 
\begin{equation}\label{2.5n8}
	R_{,} = 0.
\end{equation}
From \eqref{2.5n3}, \eqref{2.5n7} and \eqref{2.5n8}, we obtain
\begin{equation}\label{2.5n9}
	\bar{W}^h_{jkl;h} = 0.
\end{equation}
This implies that the M-projective curvature tensor is divergence-free.

\noindent Therefore the M-projective curvature tensor is divergence-free if and only if the Ricci tensor is of Codazzi type.

\vspace{6pt}

\noindent This completes the proof.
\end{proof}
\begin{theorem}
	For  a spacetime $(V_4, g)$ with constant curvature, the divergences of M-projective curvature tensor and projective curvature tensor are identical.
\end{theorem}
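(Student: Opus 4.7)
The plan is to reduce both divergences to simple expressions involving $R_{jk;l}-R_{jl;k}$ and $R_{,l}$, and then invoke the constant-curvature hypothesis to kill every term on both sides.

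First I would compute the divergence of the projective curvature tensor $W^h_{jkl}=R^h_{jkl}-\frac{1}{3}(\delta^h_l R_{jk}-\delta^h_k R_{jl})$ directly. Taking the covariant derivative with respect to $h$ and contracting, the Kronecker deltas collapse two of the Ricci terms and the once-contracted second Bianchi identity \eqref{2.7n} handles $R^h_{jkl;h}$, yielding
\begin{equation*}
W^h_{jkl;h}=\tfrac{2}{3}\bigl(R_{jk;l}-R_{jl;k}\bigr).
\end{equation*}
For the M-projective side I would simply quote the intermediate identity \eqref{2.5n3} already established in the proof of Theorem 3.1, namely
\begin{equation*}
\bar{W}^h_{jkl;h}=\tfrac{5}{6}\bigl(R_{jk;l}-R_{jl;k}\bigr)-\tfrac{1}{12}\bigl(g_{jk}R_{,l}-g_{jl}R_{,k}\bigr).
\end{equation*}

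Next I would bring in the hypothesis of constant (sectional) curvature. In dimension four this gives $R_{hijk}=\frac{R}{12}(g_{hj}g_{ik}-g_{hk}g_{ij})$, so contraction yields the Einstein condition $R_{ij}=\frac{R}{4}g_{ij}$ with $R$ a constant. Since the metric is covariantly constant and $R$ is constant, it follows immediately that $R_{ij;k}=0$ for all indices, and hence $R_{,l}=0$ as well. Substituting these into the two divergence formulas above, every term on each right-hand side vanishes, giving
\begin{equation*}
\bar{W}^h_{jkl;h}=0=W^h_{jkl;h},
\end{equation*}
which proves the two divergences are identical (and in fact both trivially zero). Alternatively, one could note that the Einstein condition with constant $R$ makes $R_{ij}$ of Codazzi type, so the M-projective divergence vanishes by Theorem 3.1, while the projective divergence vanishes by direct inspection of the formula derived above.

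The computation is essentially routine; the only step that requires any care is making precise what \emph{constant curvature} means (constant \emph{sectional} curvature, not merely constant scalar curvature), since under the weaker hypothesis $R=\mathrm{const}$ alone the two divergences reduce to $\frac{5}{6}(R_{jk;l}-R_{jl;k})$ and $\frac{2}{3}(R_{jk;l}-R_{jl;k})$ respectively and would not agree. Once this interpretation is fixed, no obstacle remains.
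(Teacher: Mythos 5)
Your proposal is correct and follows essentially the same route as the paper: compute $P^h_{ijk;h}=\tfrac{2}{3}(R_{ij;k}-R_{ik;j})$ via the contracted Bianchi identity, reuse the M-projective divergence formula \eqref{2.5n3}, and then use constant curvature to get the Einstein condition with constant scalar curvature, so that $R_{ij;k}=0$, $R_{,l}=0$ and both divergences vanish identically. Your closing caveat that the hypothesis must mean constant sectional curvature (not merely constant scalar curvature) is a fair clarification, but it does not alter the argument, which matches the paper's.
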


\begin{proof}

The projective curvature tensor $P^h_{ijk}$ is defined for a Riemannian space $V_ 4$ as
\begin{eqnarray}\label{2.10n}
	{P}^h_{ijk} = R^h_{ijk} -  \frac{1}{3} (R_{ij} \delta^h_{k} - R_{ik} \delta^h_{j}),
\end{eqnarray}
so that the expression for the divergence of the projective curvature tensor is
\begin{eqnarray}\label{2.11n}
	{P}^h_{ijk;h} = R^h_{ijk;h} -  \frac{1}{3} (R_{ij;k} - R_{ik;j} ),
\end{eqnarray}
which on using equation \eqref{2.7n} leads to
\begin{eqnarray}\label{2.12n}
	{P}^h_{ijk;h} =  \frac{2}{3} (R_{ij;k} - R_{ik;j}).
\end{eqnarray}
By hypothesis the space is of constant curvature. Then
\begin{equation}
	{R}^h_{ijk} = \lambda [\delta ^h_{k} g_{ij} - \delta ^h_{j} g_{ik}],
\end{equation}
which implies by contraction $R_{ij} = 3 \lambda g_{ij},$  ${R}^h_{ijh} = R_{ij}.$

\noindent That is, the spacetime  is an Einstein spacetime and the scalar curvature is a constant.
Therefore, $R_{ij;k} = 0 $ and $R_{,l} = 0.$

\noindent Hence, from the expression  \eqref{2.5n3} and  \eqref{2.12n} of the divergence of the M-projective curvature tensor and projective curvature tensor, respectively, both vanish. So they are identical. 
 Thus, the theorem is established.
\end{proof}

\begin{remark}
	It is clear from equation \eqref{2.12n} that the Ricci tensor must be of the Codazzi type in order for the divergence of the projective curvature tensor to zero. Additionally, for both Einstein spaces and Ricci flat spaces, the divergence of the projective curvature tensor vanishes.
\end{remark}

\begin{theorem}
For  a spacetime  $(V_4, g)$ with constant curvature, the divergences of M-projective curvature tensor and Weyl conformal tensor are identical.	
\end{theorem}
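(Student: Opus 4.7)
The plan is to mirror the argument of Theorem 3.2 on the conformal side: compute $C^h_{ijk;h}$ from the defining formula \eqref{2.8}, reduce it to an expression involving $R_{ij;k}-R_{ik;j}$ and $R_{,l}$ alone, and then invoke the constant-curvature hypothesis to make both this divergence and the divergence \eqref{2.5n3} of the M-projective tensor vanish. Since both quantities equal zero under the same hypothesis, they coincide.

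First I would take the covariant divergence of \eqref{2.8} term by term. The Kronecker deltas commute with $\nabla$ and the metric is parallel, so the only nontrivial inputs are the once-contracted second Bianchi identity $R^h_{ijk;h}=R_{ij;k}-R_{ik;j}$ from \eqref{2.7n} and the twice-contracted identity $R^h_{l;h}=\tfrac12 R_{,l}$ from \eqref{2.7n1}. After substituting these and collecting like terms, the computation yields the standard four-dimensional Weyl-divergence identity
$$C^h_{ijk;h} \;=\; \tfrac12\bigl(R_{ij;k}-R_{ik;j}\bigr) - \tfrac{1}{12}\bigl(g_{ij}R_{,k}-g_{ik}R_{,j}\bigr).$$

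Next I would apply the constant-curvature assumption. Exactly as in the proof of Theorem 3.2, writing $R^h_{ijk}=\lambda(\delta^h_k g_{ij}-\delta^h_j g_{ik})$ with $\lambda$ constant gives $R_{ij}=3\lambda g_{ij}$, whence $R_{ij;k}=0$ and $R_{,l}=0$. Feeding these into the displayed formula gives $C^h_{ijk;h}=0$, while feeding the same vanishings into \eqref{2.5n3} gives $\bar W^h_{jkl;h}=0$. Both divergences being identically zero, they are in particular identical, which establishes the theorem.

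I do not anticipate a genuine obstacle; the only care needed is index bookkeeping when distributing $;h$ across the six terms of \eqref{2.8}, and ensuring the $\tfrac12(R_{ij;k}-R_{ik;j})$ contribution from the Ricci block combines correctly with the $R^h_{ijk;h}=R_{ij;k}-R_{ik;j}$ contribution from the Riemann block to yield the coefficient $\tfrac12$ shown above. The rest is routine cancellation followed by an immediate appeal to the constant-curvature condition.
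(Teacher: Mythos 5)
Your proposal is correct and takes essentially the same route as the paper: differentiate the defining formula for $C^h_{ijk}$, use the contracted Bianchi identities \eqref{2.7n} and \eqref{2.7n1}, and let the constant-curvature hypothesis (Einstein with constant $R$, exactly as in the proof of the preceding theorem) annihilate every $R_{ij;k}$ and $R_{,l}$ term in both divergences. If anything your write-up is the more complete one, since the paper stops at the relation \eqref{2.17n} between the two divergences and leaves the constant-curvature vanishing implicit; moreover your identity $C^h_{ijk;h}=\tfrac12\bigl(R_{ij;k}-R_{ik;j}\bigr)-\tfrac1{12}\bigl(g_{ij}R_{,k}-g_{ik}R_{,j}\bigr)$ carries the correct four-dimensional coefficients, whereas the numerical factors recorded in \eqref{2.16n}--\eqref{2.17n} contain slips (harmless here, since the offending terms vanish under the hypothesis anyway).
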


\begin{proof}
Weyl \cite{Weyl} defined the conformal curvature tensor $C^h_{ijk}$, often known as the Weyl conformal curvature tensor, by the equation
	\begin{equation}\label{2.15n}
	C^h_{ijk}= R^h_{ijk}+\frac{1}{2}(\delta^h_j R_{ik}-\delta^h_k R_{ij}+ R^h_j g_{ik}- R^h_k g_{ij})+\frac{R}{6}
	(g_{ij}\delta^h_k- g_{ik}\delta^h_j).
\end{equation}
 Using equation \eqref{2.15n} covariant derivative to get the following as the expression for the divergence of the conformal curvature tensor
\begin{equation}\label{2.16n}
	C^h_{ijk;h} =    R^h_{ijk;h}  + \frac{1}{2} (  R_{ik;j} - R_{ij;k} )+  \frac{2}{3} ( g_{ik} R_{,j}  - g_{ij} R_{,k} ).
\end{equation} 
  The divergence of the M-projective curvature tensor and the conformal curvature tensor are connected by the equations \eqref{2.5n3} and \eqref{2.16n}, 
 \begin{equation}\label{2.17n}
 	C^h_{ijk;h} =    \bar{W}^h_{ijk;h}  + \frac{2}{3} (  R_{ik;j} - R_{ij;k} )+  \frac{1}{2} ( g_{ik} R_{,j}  - g_{ij} R_{,k} ).
 \end{equation}
This completes the proof.
\end{proof}
 
 \begin{theorem}
 For  spacetime  $(V_4, g)$ with constant curvature, the divergences of M-projective curvature tensor and  conharmonic curvature tensor are identical.	
 \end{theorem}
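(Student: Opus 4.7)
The plan is to mirror the strategy used for the Weyl conformal case (Theorem concerning $C^h_{ijk;h}$). First I would recall the definition of the conharmonic curvature tensor in a four-dimensional spacetime,
$$
N^h_{ijk} = R^h_{ijk} - \frac{1}{2}\bigl(\delta^h_k R_{ij} - \delta^h_j R_{ik} + g_{ij} R^h_k - g_{ik} R^h_j\bigr),
$$
and take its covariant derivative contracted on the upper index $h$, so as to form its divergence $N^h_{ijk;h}$. The only non-routine inputs will be the two standard identities already used in the preceding proofs, namely $R^h_{ijk;h} = R_{ij;k} - R_{ik;j}$ (a consequence of the second Bianchi identity) and $R^h_{k;h} = \tfrac{1}{2} R_{,k}$ (the contracted Bianchi identity). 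Substituting these and combining like terms should give an expression of the form
$$
N^h_{ijk;h} = \tfrac{1}{2}\bigl(R_{ij;k} - R_{ik;j}\bigr) - \tfrac{1}{4}\bigl(g_{ij} R_{,k} - g_{ik} R_{,j}\bigr).
$$

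Next, using the constant-curvature hypothesis exactly as in the proof of the analogous projective theorem, I would invoke the fact that a space of constant curvature has $R^h_{ijk} = \lambda(\delta^h_k g_{ij} - \delta^h_j g_{ik})$, whence by contraction $R_{ij} = 3\lambda g_{ij}$ and $R = 12\lambda$ is constant. This forces $R_{ij;k}=0$ and $R_{,l}=0$ simultaneously. Feeding these two vanishing conditions into both the expression for $\bar W^h_{jkl;h}$ given in \eqref{2.5n3} and the expression for $N^h_{ijk;h}$ derived above, each divergence collapses to zero, and therefore the two divergences agree identically.

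Writing a bridging identity (in the spirit of \eqref{2.17n}) that relates $N^h_{ijk;h}$ and $\bar W^h_{ijk;h}$ without invoking constant curvature would also be natural, and I would include it as a remark: one obtains
$$
N^h_{ijk;h} = \bar W^h_{ijk;h} - \tfrac{1}{3}\bigl(R_{ij;k} - R_{ik;j}\bigr) + \tfrac{1}{6}\bigl(g_{ij} R_{,k} - g_{ik} R_{,j}\bigr),
$$
(up to a sign check on the coefficients), so that constant curvature makes the correction terms vanish, yielding the desired equality. The main obstacle is purely bookkeeping: making sure the numerical coefficients coming from the $\tfrac{1}{2(n-1)}$ of the M-projective tensor and the $\tfrac{1}{n-2}$ of the conharmonic tensor (at $n=4$, giving $\tfrac{1}{6}$ and $\tfrac{1}{2}$ respectively) are correctly combined with the $\tfrac{1}{2}R_{,k}$ coming from the contracted Bianchi identity. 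No deeper difficulty arises, since under constant curvature both correction terms vanish separately.
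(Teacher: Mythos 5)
Your proposal is correct and follows essentially the same route as the paper: take the divergence of the conharmonic tensor using the contracted Bianchi identities $R^h_{ijk;h}=R_{ij;k}-R_{ik;j}$ and $R^h_{k;h}=\tfrac{1}{2}R_{,k}$, relate it to $\bar W^h_{ijk;h}$ from \eqref{2.5n3}, and let constant curvature (hence $R_{ij;k}=0$, $R_{,l}=0$) kill the correction terms. You are in fact slightly more careful than the paper, since you make the constant-curvature step explicit (the paper stops at the bridging identity \eqref{2.20n}) and your coefficient $\tfrac{1}{4}$ on the scalar-gradient term is the correct one, whereas \eqref{2.19n} carries a harmless $\tfrac{1}{2}$ slip; your own sign on the last term of the bridging identity should read $-\tfrac{1}{6}\bigl(g_{ij}R_{,k}-g_{ik}R_{,j}\bigr)$, as you anticipated.
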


 \begin{proof}
The conharmonic curvature tensor \cite{Ishii Y} $Z^h_{ ijk}$ is defined for a $V_ 4$ as
\begin{equation}\label{2.18n}
	Z^h_{ijk}=R^h_{ijk}+\frac{1}{2}(\delta^h_{j} R_{ik}-\delta^h_{k} R_{ij}+g_{ik} R^h_j-g_{ij} R^h_k).
\end{equation}
The divergence of $Z^h_{ijk}$ is determined by taking the covariant derivative of equation \eqref{2.18n}
\begin{equation}\label{2.19n}
	Z^h_{ijk;h} =    R^h_{ijk;h}  + \frac{1}{2} (  R_{ik;j} - R_{ij;k}  +  g_{ik} R_{,j}  - g_{ij} R_{,k} ).
\end{equation} 

\noindent The divergence of the M-projective curvature tensor and the conharmonic curvature tensor are connected by the equations \eqref{2.5n3} and \eqref{2.19n}, as follows
\begin{equation}\label{2.20n}
	Z^h_{ijk;h} =    \bar{W}^h_{ijk;h}  + \frac{1}{3} (  R_{ik;j} - R_{ij;k}  +  g_{ik} R_{,j}  - g_{ij} R_{,k} ).
\end{equation} 
This completes the proof.
\end{proof}
 \begin{theorem}
 The divergences of M-projective curvature tensor and  concircular curvature tensor are identical in a $(V_4, g)$ with constant curvature.	
 \end{theorem}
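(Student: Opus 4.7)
The plan is to mirror the strategy used in Theorem 3.2, which handled the projective case under the same constant-curvature hypothesis. The essential observation is that for $(V_4,g)$ of constant curvature, the scalar curvature $R$ is constant and the space is Einstein with $R_{ij}=3\lambda g_{ij}$, so $R_{ij;k}=0$ and $R_{,l}=0$. Because the divergence expression for $\bar{W}^h_{jkl}$ obtained in \eqref{2.5n3} only involves these two pieces of data, it vanishes; the same will be shown for the concircular tensor, yielding the identity.

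First I would recall the concircular curvature tensor of Yano, which in $V_4$ takes the form
\begin{equation*}
\tilde{C}^{h}_{ijk}=R^{h}_{ijk}-\frac{R}{12}\bigl(\delta^{h}_{k}g_{ij}-\delta^{h}_{j}g_{ik}\bigr).
\end{equation*}
Taking the covariant derivative in $h$ and contracting, and then invoking the contracted second Bianchi identity \eqref{2.7n}, I would obtain
\begin{equation*}
\tilde{C}^{h}_{ijk;h}=R_{ij;k}-R_{ik;j}-\frac{1}{12}\bigl(g_{ij}R_{,k}-g_{ik}R_{,j}\bigr).
\end{equation*}
This is the analogue of the formulas \eqref{2.12n}, \eqref{2.16n} and \eqref{2.19n} used in the preceding three theorems, and it can be compared directly against \eqref{2.5n3} to express $\tilde{C}^{h}_{ijk;h}$ in terms of $\bar{W}^{h}_{ijk;h}$ plus a correction involving $R_{ij;k}-R_{ik;j}$ and $R_{,l}$.

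Next I would apply the constant-curvature hypothesis exactly as in the proof of Theorem 3.2: from $R^{h}_{ijk}=\lambda(\delta^{h}_{k}g_{ij}-\delta^{h}_{j}g_{ik})$ one contracts to get $R_{ij}=3\lambda g_{ij}$, hence $R=12\lambda$ is constant, and differentiating gives $R_{ij;k}=0$ and $R_{,l}=0$. Substituting into the divergence formulas for $\bar{W}^{h}_{ijk}$ and $\tilde{C}^{h}_{ijk}$ shows that both sides vanish identically, so the two divergences coincide, establishing the theorem.

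There is no real obstacle; the step that requires the most care is simply fixing the correct numerical coefficient in the concircular tensor (namely $\tfrac{R}{n(n-1)}=\tfrac{R}{12}$ for $n=4$) so that the resulting divergence matches the structure of \eqref{2.5n3} and collapses to zero under the constant-curvature assumption. The proof is essentially a parallel to the projective case and should read as a short, self-contained computation concluding with \textquotedblleft both divergences vanish, hence they are identical.\textquotedblright
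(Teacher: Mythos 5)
Your proposal is correct and follows essentially the same route as the paper: compute the divergence of the concircular tensor $\bar{C}^h_{ijk}=R^h_{ijk}-\frac{R}{12}(\delta^h_k g_{ij}-\delta^h_j g_{ik})$ using the contracted Bianchi identity \eqref{2.7n}, compare it with the M-projective divergence \eqref{2.5n3}, and conclude under the constant-curvature hypothesis. The only difference is cosmetic: you explicitly carry out the constant-curvature substitution ($R_{ij}=3\lambda g_{ij}$, $R_{ij;k}=0$, $R_{,l}=0$) so that both divergences visibly vanish, a step the paper performs in its Theorem on the projective case and leaves implicit here.
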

 
 \begin{proof}
The Concircular curvature tensor $\bar{C}^h_{ijk}$, for a $V_4$  is defined \cite{AhsanSA} as 
\begin{equation}\label{2.22n}
	\bar{C}^h_{ijk}=R^h_{ijk} - \frac{R}{12}(\delta^h_{k} g_{ij}-\delta^h_{j} g_{ik}).
\end{equation}
The divergence of $\bar{C}^h_{ijk}$ is determined by taking the covariant derivative of equation \eqref {2.22n}
\begin{equation}\label{2.23n}
	\bar{C}^h_{ijk;h} =    {R}^h_{ijk;h}  - \frac{1}{12} (   g_{ij} R_{,k}  - g_{ik} R_{,j} ).
\end{equation} 
Using equations \eqref{2.5n3} and \eqref{2.23n}, we get
\begin{equation}\label{2.24n}
	\bar{C}^h_{ijk;h} =    \bar{W}^h_{ijk;h}  + \frac{1}{6} (  R_{ij;k} - R_{ik;j} )+  \frac{1}{12} ( g_{ij} R_{,k}  - g_{ik} R_{,j} ).
\end{equation}
This completes the proof.
\end{proof}
 \begin{theorem}
 For  spacetime  $(V_4, g)$ with constant curvature, the divergences of M-projective curvature tensor and  W-curvature tensor are identical.
 \end{theorem}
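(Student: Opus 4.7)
The plan is to mirror the structure of the proofs of Theorems~3.3--3.5. First I would write down the W-curvature tensor of Pokhariyal and Mishra in local coordinates for $n=4$; depending on the convention chosen in the paper it has the schematic shape
\[
W^h_{ijk} = R^h_{ijk} + \tfrac{1}{n-1}\bigl(\delta^h_k R_{ij} - g_{ij} R^h_k\bigr),
\]
so that the only nontrivial contractions with the covariant derivative fall on the Ricci tensor and on the Ricci term contracted to the scalar curvature. I would then take the covariant divergence on the contravariant index $h$ to obtain $W^h_{ijk;h}$ as a sum of $R^h_{ijk;h}$ and first covariant derivatives of the Ricci tensor.

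Next I would apply the two standard Bianchi contractions already used in Section~3: the twice-contracted identity $R^h_{ijk;h} = R_{ij;k} - R_{ik;j}$ from \eqref{2.7n} and $R^h_{k;h} = \tfrac12 R_{,k}$ from \eqref{2.7n1}. This reduces $W^h_{ijk;h}$ to a linear combination of $R_{ij;k}$, $R_{ik;j}$ and the scalar-curvature derivative terms $g_{ij}R_{,k}$, $g_{ik}R_{,j}$. Subtracting the already-derived expression \eqref{2.5n3} for $\bar{W}^h_{jkl;h}$ termwise should give a bridging identity of the form
\[
W^h_{ijk;h} = \bar{W}^h_{ijk;h} + \alpha\bigl(R_{ij;k} - R_{ik;j}\bigr) + \beta\bigl(g_{ij} R_{,k} - g_{ik} R_{,j}\bigr),
\]
for explicit constants $\alpha,\beta$ that are forced by the coefficients appearing in the definition of $W^h_{ijk}$.

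Finally, under the hypothesis of constant curvature $R^h_{ijk}=\lambda(\delta^h_k g_{ij}-\delta^h_j g_{ik})$ one has $R_{ij}=3\lambda g_{ij}$, hence $R_{ij;k}=0$ and $R_{,k}=0$, so both correction terms in the bridging identity vanish and the two divergences coincide (indeed each vanishes individually, as in Theorem~3.2). The main obstacle I foresee is not analytical but notational: several inequivalent tensors travel under the name ``$W$-curvature tensor'' in the Pokhariyal--Mishra family $W_1, W_2, \ldots$, so one must commit to the specific definition adopted by the authors before the coefficients $\alpha,\beta$ in the relation can be pinned down. Once that choice is fixed, the calculation is routine bookkeeping entirely parallel to the conformal, conharmonic and concircular cases treated just above, invoking no identities beyond the Bianchi contractions that have been used repeatedly throughout Section~3.
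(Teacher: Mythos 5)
Your proposal is correct and follows essentially the same route as the paper: take the covariant divergence of the $W_2$-tensor, apply the contracted Bianchi identities \eqref{2.7n} and \eqref{2.7n1}, subtract \eqref{2.5n3} to get a bridging identity, and kill the correction terms using $R_{ij;k}=0$, $R_{,k}=0$ in a space of constant curvature (a step the paper leaves implicit, merely stating the identity \eqref{2.27n}). The only discrepancy is conventional, as you anticipated: with the paper's definition $W^h_{ijk}=R^h_{ijk}+\tfrac{1}{3}\bigl(R_{ik}\delta^h_{j}-g_{ij}R^h_{k}\bigr)$ the correction comes out as the symmetric combination $\tfrac{1}{6}\bigl(R_{ij;k}+R_{ik;j}-g_{ij}R_{,k}-g_{ik}R_{,j}\bigr)$ rather than the antisymmetric shape you sketched, but since constant curvature annihilates every Ricci-derivative term this does not affect the conclusion.
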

	
\begin{proof}
A W-curvature tensor, also known as a $W_2$-curvature tensor, was first introduced by Pokhariyal and Mishra \cite{GPPokh}  in 1970 and its definition is
\begin{equation}\label{2.25n}
	W^h_{ijk} = R^h_{ijk} + \frac{1}{3} [R_{ik} \delta^h_{j} -  g_{ij} R^h_{k}].
\end{equation}
The divergence of $W^ h_{ ijk}$ is determined by taking the covariant derivative of equation \eqref {2.25n},
\begin{eqnarray}\label{2.26n}
	W^h_{ijk;h} = R^h_{ijk;h} + \frac{1}{3} [R_{ik;j}  -  g_{ij} R_{,k}].
\end{eqnarray}
\noindent The divergence of the M-projective curvature tensor and the W-curvature tensor are connected by the equations \eqref{2.5n3} and \eqref{2.26n}, as follows
\begin{equation}\label{2.27n}
	W^h_{ijk;h} =    \bar{W}^h_{ijk;h}  + \frac{1}{6} (  R_{ij;k} + R_{ik;j} -  g_{ij} R_{,k}  - g_{ik} R_{,j} ).
\end{equation} 
This completes the proof.
\end{proof}

\begin{remark}
Since a space with constant curvature is an Einstein space, it follows from the foregoing discussions (cf. Theorems 2.2 to 2.6) that for Einstein spaces, the divergence of the M-projective curvature tensor is the same as the divergence of the projective, conformal, conharmonic, concircular, and W-curvature tensors, despite the fact that all six curvature tensors have different.
\end{remark}

\section{\small M-projective Curvature Collineation}
In this section, we will define a symmetry property of spacetime of general relativity in terms of the M-projective curvature tensor and determine the prerequisites followed for  $(V_4, g)$ to admit for the presence of a certain symmetry. The following definitions are important for the development of the results in this section.
\begin{definition}
	A spacetime  $(V_4, g)$   is said to admit M-projective curvature collineation along a  vector field $\xi^{i}$  such that
	\end{definition}
\begin{equation}\label{3.1}
	 \pounds_\xi \bar{W}^h_{ijk} = 0,
\end{equation}
 where $\bar{W}^h_{ijk} $ is M-projective curvature tensor defined through the equation \eqref{2.5n} (For detailed study of Lie derivatives and collineations see \cite{yanoK}).
 Taking Lie derivative of Equation \eqref{2.5n} with respect to vector field $\xi$ 
 \begin{eqnarray}\label{3.2n}
 \pounds_\xi	\bar{W}^h_{ijk} = \pounds_\xi  R^h_{ijk} - \frac{1}{6} \pounds_\xi [R_{ij} \delta^h_{k} - R_{ik} \delta^h_{j} + g_{ij} R^h_{k}  - g_{ik} R^h_{j}].
 \end{eqnarray}
 Katzin et al. \cite{Katzin} have given the relationship chart of different symmetry properties of a spacetime of general relativity and from that chart we deduce have following lemma:
 \begin{lemma}\label{Lemma1}
\cite{Ahsan2} Every motion in $V_n$ implies Weyl projective collineation (WPC), Weyl conformal collineation (W conf C), and curvature collineation (CC).

 \end{lemma}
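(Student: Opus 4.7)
The plan is to establish all three collineations by first showing that a Killing vector field $\xi$ forces the Lie derivative of the Christoffel symbols to vanish, and then propagating this fact through the algebraic definitions of the three curvature-type tensors. First I would start from the motion hypothesis $\pounds_\xi g_{ij}=0$ (equivalently, the Killing equation $\xi_{i;j}+\xi_{j;i}=0$ given in \eqref{1.8}) and invoke the classical identity
\[
\pounds_\xi \Gamma^h_{ij}=\tfrac{1}{2}g^{hk}\bigl[(\pounds_\xi g_{jk})_{;i}+(\pounds_\xi g_{ik})_{;j}-(\pounds_\xi g_{ij})_{;k}\bigr]
\]
to conclude $\pounds_\xi \Gamma^h_{ij}=0$. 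Combining this with the tensorial identity $\pounds_\xi R^h_{ijk}=(\pounds_\xi\Gamma^h_{ik})_{;j}-(\pounds_\xi\Gamma^h_{ij})_{;k}$ immediately gives $\pounds_\xi R^h_{ijk}=0$, i.e.\ the curvature collineation (CC).

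Next I would contract on $h$ and $k$ to obtain the Ricci collineation $\pounds_\xi R_{ij}=0$. Since $\pounds_\xi g_{ij}=0$ also implies $\pounds_\xi g^{ij}=0$ (differentiate $g^{ij}g_{jk}=\delta^i_k$), the mixed Ricci tensor $R^h_j=g^{hk}R_{kj}$ and the scalar curvature $R=g^{ij}R_{ij}$ likewise satisfy $\pounds_\xi R^h_j=0$ and $\pounds_\xi R=0$; and of course $\pounds_\xi\delta^h_k=0$, since the Kronecker symbol is a constant tensor.

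To deduce W conf C, I would apply $\pounds_\xi$ termwise to the conformal curvature tensor in \eqref{2.8},
\[
C^h_{ijk}=R^h_{ijk}+\tfrac{1}{2}\bigl(\delta^h_jR_{ik}-\delta^h_kR_{ij}+R^h_jg_{ik}-R^h_kg_{ij}\bigr)+\tfrac{R}{6}\bigl(g_{ij}\delta^h_k-g_{ik}\delta^h_j\bigr),
\]
and observe that every building block has already been shown to be Lie-invariant, so $\pounds_\xi C^h_{ijk}=0$. Applying exactly the same termwise argument to the projective curvature tensor $W^h_{ijk}=R^h_{ijk}-\tfrac{1}{3}(\delta^h_kR_{ij}-\delta^h_jR_{ik})$ defined just after \eqref{1.14} yields WPC.

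The main (and essentially only) obstacle is the opening step $\pounds_\xi g_{ij}=0\Rightarrow\pounds_\xi\Gamma^h_{ij}=0$: one must either derive the displayed formula for $\pounds_\xi\Gamma^h_{ij}$ from first principles using the torsion-free Levi-Civita connection, or cite it from \cite{yanoK}. Everything thereafter is a routine linearity-plus-Leibniz calculation, since each of $W^h_{ijk}$ and $C^h_{ijk}$ is a polynomial in $g_{ij}$, $g^{ij}$, $\delta^h_k$, $R^h_{ijk}$, $R_{ij}$, and $R$ — all of which have already been shown to be annihilated by $\pounds_\xi$.
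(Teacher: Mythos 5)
Your proposal is correct, but it takes a genuinely different route from the paper: the paper offers no proof of Lemma \ref{Lemma1} at all, presenting it as a known result quoted from \cite{Ahsan2} and read off from the chart of interrelations among spacetime symmetries compiled by Katzin, Levine and Davis \cite{Katzin}. What you have written out is essentially the classical derivation that underlies that chart. Your chain is sound: the Killing equation \eqref{1.8} together with Yano's formula for $\pounds_\xi\Gamma^h_{ij}$ gives $\pounds_\xi\Gamma^h_{ij}=0$ (i.e.\ every motion is an affine collineation), the identity expressing $\pounds_\xi R^h_{ijk}$ through covariant derivatives of $\pounds_\xi\Gamma^h_{ij}$ then yields CC, and since the Lie derivative commutes with contraction and $\pounds_\xi g^{ij}=0$, the Lie derivatives of $R_{ij}$, $R^h_j$ and $R$ all vanish, so termwise differentiation of the conformal tensor \eqref{2.8} and of the projective tensor defined after \eqref{1.14} gives W conf C and WPC. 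The one step you flag — the passage from $\pounds_\xi g_{ij}=0$ to $\pounds_\xi\Gamma^h_{ij}=0$ — is indeed standard and is available in \cite{yanoK}, which the paper itself cites for Lie-derivative theory. The trade-off is clear: the paper's citation is economical and treats the lemma as background, while your argument is self-contained, works in arbitrary $V_n$, and makes explicit the stronger intermediate fact (motion $\Rightarrow$ affine collineation) from which all three collineations follow at once; only the index placement in $\pounds_\xi R^h_{ijk}=(\pounds_\xi\Gamma^h_{ik})_{;j}-(\pounds_\xi\Gamma^h_{ij})_{;k}$ should be checked against the curvature convention $R_{ij}=R^h_{ihj}$ used here, which is a matter of bookkeeping rather than substance.
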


\begin{theorem}
If a spacetime $(V_4, g)$ admits motion, then M-projective curvature collineation admit along a vector field $\xi$. 
\end{theorem}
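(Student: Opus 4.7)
The plan is to derive $\pounds_\xi \bar W^h_{ijk}=0$ directly from the equation (3.2) supplied above the theorem, by showing that the Lie derivative of each ingredient on the right-hand side vanishes under the hypothesis of motion. To that end, I first translate the hypothesis: admitting motion means $\xi$ is a Killing vector, so $\pounds_\xi g_{ij}=0$, and consequently $\pounds_\xi g^{ij}=0$ as well. Lemma 3.1 (the one labelled \emph{Lemma 1} in the excerpt) then gives, as a direct consequence of motion, the curvature collineation $\pounds_\xi R^h_{ijk}=0$.

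Next I would use the compatibility of the Lie derivative with contractions. Contracting the relation $\pounds_\xi R^h_{ijk}=0$ on $h$ and $k$ yields $\pounds_\xi R_{ij}=0$; raising an index against the metric (whose Lie derivative already vanishes) gives $\pounds_\xi R^h_k=0$. The Kronecker delta is a constant tensor, hence $\pounds_\xi \delta^h_k=0$ automatically. Thus every factor appearing inside the bracket in equation (3.2) has vanishing Lie derivative.

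I would then apply the Leibniz rule to each of the four terms in (3.2):
\begin{equation*}
\pounds_\xi(R_{ij}\delta^h_k)=(\pounds_\xi R_{ij})\delta^h_k+R_{ij}(\pounds_\xi\delta^h_k)=0,
\end{equation*}
and analogously for the remaining three terms $R_{ik}\delta^h_j$, $g_{ij}R^h_k$, $g_{ik}R^h_j$. Combined with $\pounds_\xi R^h_{ijk}=0$ from the first step, equation (3.2) collapses to $\pounds_\xi \bar W^h_{ijk}=0$, which is precisely the defining condition (3.1) for M-projective curvature collineation along $\xi$.

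There is no real obstacle here: once Lemma 3.1 is invoked, the argument reduces to the observation that the M-projective curvature tensor is a natural polynomial in $g_{ij}$, $\delta^h_k$, $R^h_{ijk}$, and traces thereof, so the Lie derivative along any Killing field annihilates it term by term. The only point requiring attention is the bookkeeping for raised indices, which is handled by the vanishing of $\pounds_\xi g^{ij}$.
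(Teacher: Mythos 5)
Your proof is correct, but it follows a more direct route than the paper's. The paper does not actually use equation \eqref{3.2n} in its argument: it first rewrites $\bar{W}^h_{ijk}$ in terms of the Weyl projective tensor $P^h_{ijk}$ (equation \eqref{3.4}) and the conformal tensor $C^h_{ijk}$ (equation \eqref{3.5}), and then invokes Lemma \ref{Lemma1} in its full strength --- motion implies Weyl projective collineation, Weyl conformal collineation \emph{and} curvature collineation --- together with the implication CC $\Rightarrow$ RC, to annihilate each piece of those decompositions. You instead apply the Lie derivative to the defining formula itself, and need only the single implication motion $\Rightarrow$ CC from Lemma \ref{Lemma1}, combined with $\pounds_\xi g_{ij}=0$, $\pounds_\xi g^{ij}=0$, $\pounds_\xi\delta^h_k=0$, the commutation of $\pounds_\xi$ with contraction (so that $\pounds_\xi R_{ij}=0$ and $\pounds_\xi R^h_k=0$), and the Leibniz rule. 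Your argument is the more economical and self-contained of the two, since it avoids the intermediate decompositions \eqref{3.4}--\eqref{3.5} and makes explicit exactly which hypotheses are consumed at each step; the paper's route, by contrast, displays the relation of $\bar{W}^h_{ijk}$ to the projective and conformal tensors, which it then reuses in the subsequent electromagnetic-field theorems (equations \eqref{3.8} and \eqref{3.10}). Both proofs are valid.
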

\begin{proof}
The Riemann tensor is defined as follows in terms of Weyl projective curvature tensor $P^h_ {ijk}$,
\begin{eqnarray}\label{3.3}
	{R}^h_{ijk} = P^h_{ijk} +  \frac{1}{3} (R_{ij} \delta^h_{k} - R_{ik} \delta^h_{j}).
\end{eqnarray}
Using equations \eqref{3.3} and \eqref{2.5n}, the M-projective curvature tensor expression is to be written as
\begin{eqnarray}\label{3.4}
	\bar{W}^h_{ijk} = P^h_{ijk} + \frac{1}{6} [R_{ij} \delta^h_{k} - R_{ik} \delta^h_{j} - g_{ij} R^h_{k}  + g_{ik} R^h_{j}].
\end{eqnarray}
The Weyl conformal tensor can be written similarly,
\begin{equation}\label{3.5}
	\bar{W}^h_{ijk} =    C^h_{ijk}  -  \frac{R}{6} (  \delta^h_{k} g_{ij} - \delta^h_{j} g_{ik} )+  \frac{1}{3} (R_{ij} \delta^h_{k} - R_{ik} \delta^h_{j} + g_{ij} R^h_{k}  -  g_{ik} R^h_{j}),
\end{equation}
where $R = g_ {ij} R^ij$ is scalar curvature and $C^h_{ijk}$ is the Weyl conformal curvature tensor. 
\vspace{6pt}
\par

It is evident from equations \eqref{3.4} and \eqref{3.5}, \cite{Katzin} every motion implies to CC and every CC implies to RC, therefore we can then equate the  Lie derivative of the M-projective curvature tensor with that of the Weyl projective and conformal curvature tensor. Equation \eqref{3.4} or \eqref{3.5} provides the following by using Lemma \ref{Lemma1}. This completes the proof. 
\end{proof}{\small {\tiny }}
Now consider the following lemma:
\begin{lemma}\label{Lemma2}
	\cite{Michalaski } If $\xi$ is a Killing vector in a non-null electromagnetic field, then the Lie derivative of the electromagnetic field tensor $F_{ij}$ with respect to $\xi$  vanishes.
	\end{lemma}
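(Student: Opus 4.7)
The plan is to chain three facts together: the Killing equation, Einstein's equations sourced by the Maxwell field, and the algebraic non-degeneracy that characterizes a non-null electromagnetic field. The target is to deduce $\pounds_\xi F_{ij} = 0$ from the Killing condition $\pounds_\xi g_{ij} = 0$.

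First I would establish that the Maxwell stress-energy tensor is Lie-invariant along $\xi$. Since $\xi$ is Killing, $\pounds_\xi g_{ij} = 0$, and the standard identity expressing $\pounds_\xi \Gamma^{h}_{ij}$ linearly in $\pounds_\xi g$ forces $\pounds_\xi R^{h}_{ijk} = 0$ and hence $\pounds_\xi R_{ij} = 0$. Applying $\pounds_\xi$ to Einstein's equation \eqref{2.1} kills the metric and cosmological terms, leaving $\pounds_\xi T_{ij} = 0$ for the Maxwell stress-energy tensor $T_{ij} = F_{ik} F_{j}{}^{k} - \tfrac{1}{4} g_{ij} F_{kl} F^{kl}$.

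Next I would write $\Phi_{ij} := \pounds_\xi F_{ij}$ and expand $\pounds_\xi T_{ij} = 0$ by the Leibniz rule, using $\pounds_\xi g_{ij} = 0$ to discard all metric contributions. What remains is a purely algebraic bilinear constraint coupling $F$ and $\Phi$. The non-null hypothesis now enters: a non-null $F$ has a non-vanishing quadratic invariant, admits two distinct real principal null directions, and can be placed in canonical block form in a suitable null tetrad. In that frame the algebraic constraint decouples into a diagonal linear system whose only solution is $\Phi_{ij} = 0$, apart from a residual direction proportional to the Hodge dual of $F$. This last direction is killed by substituting $\Phi_{ij} = c \, {}^{\ast}F_{ij}$ back into the source-free Maxwell equation $F^{ij}{}_{;j} = 0$ and using the commutation of $\pounds_\xi$ with covariant differentiation along a Killing flow, which forces $c = 0$.

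The main obstacle is precisely the elimination of this duality-rotation direction, which is exactly the step that fails in the null case and is the only reason the non-null hypothesis is needed. Implementing it cleanly requires either an adapted-tetrad calculation or an invariant-theoretic argument based on the quadratic scalars of $F$; everything else in the proof is a routine application of the Killing equation and the Leibniz rule.
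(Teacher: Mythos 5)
The paper does not actually prove this lemma: it is imported verbatim from the cited reference (Michalski--Wainwright), so the only comparison available is with the argument in that source, which your proposal essentially tries to reconstruct. Your first two steps are sound and standard: $\pounds_\xi g_{ij}=0$ gives $\pounds_\xi R_{ij}=0$, hence $\pounds_\xi T_{ij}=0$ via the field equations, and the algebraic (Rainich-type) analysis of $\pounds_\xi T_{ij}=0$ at a non-null $F$ correctly leaves a one-dimensional kernel, $\pounds_\xi F_{ij}=c\,{}^{\ast}F_{ij}$, the infinitesimal duality rotation.

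The gap is in the final step, which is exactly where you locate the crux. Substituting $\pounds_\xi F_{ij}=c\,{}^{\ast}F_{ij}$ into the source-free Maxwell equations and using $[\pounds_\xi,\nabla]=0$ for a Killing vector does \emph{not} force $c=0$; it only forces $c$ to be constant. Explicitly,
\begin{equation*}
0=\pounds_\xi\bigl(F^{ij}{}_{;j}\bigr)=\bigl(c\,{}^{\ast}F^{ij}\bigr)_{;j}=c_{,j}\,{}^{\ast}F^{ij},
\end{equation*}
since ${}^{\ast}F^{ij}{}_{;j}=0$ follows from $F_{[ij;k]}=0$; the cyclic Maxwell equation similarly yields $c_{,j}F^{ij}=0$, and non-nullity of $F$ then gives only $c_{,j}=0$. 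A constant duality rotation therefore survives your argument untouched, and it is precisely the well-known obstruction to symmetry inheritance for non-null Einstein--Maxwell fields: disposing of (or characterizing) this constant is the substantive content of the Michalski--Wainwright analysis and cannot be obtained by the routine substitution you describe. As written, your proof establishes $\pounds_\xi F_{ij}=c\,{}^{\ast}F_{ij}$ with $c$ constant, which is strictly weaker than the lemma, so the argument is incomplete at its decisive step.
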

\begin{theorem}
A non-null electromagnetic field admits M-projective curvature collineation if it admits motion.
\end{theorem}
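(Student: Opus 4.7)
The plan is to chain together the preceding Theorem (motion implies M-projective curvature collineation) with Lemma \ref{Lemma2} (a Killing vector in a non-null electromagnetic field preserves $F_{ij}$), so that both the geometric and the electromagnetic structures become simultaneously invariant under the flow of the assumed Killing vector.

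First, I would begin by supposing that the spacetime carrying the non-null electromagnetic field $F_{ij}$ admits motion, so that there exists a vector field $\xi^{i}$ satisfying the Killing equation $\pounds_\xi g_{ij}=0$. This is the only hypothesis available, and everything else has to be extracted from it by invoking results already established in this section.

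Next, I would apply Lemma \ref{Lemma2} directly to the Killing vector $\xi^{i}$ to conclude that $\pounds_\xi F_{ij}=0$. This is precisely the symmetry condition \eqref{1.15} required for the non-null electromagnetic field to be compatible with the flow of $\xi^{i}$. In parallel, I would invoke the Theorem immediately preceding this statement, which showed that motion forces $\pounds_\xi \bar{W}^h_{ijk}=0$; this is the defining condition \eqref{3.1} of M-projective curvature collineation.

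The final step is purely a bookkeeping one: combining the two Lie-derivative identities shows that the same $\xi^{i}$ simultaneously leaves the M-projective curvature tensor and the non-null electromagnetic field tensor invariant, which is exactly the assertion that the non-null electromagnetic field admits M-projective curvature collineation. The one point I would expect to get scrutinized is the essential appeal to non-nullness of $F_{ij}$: the M-projective part of the argument is purely geometric and does not see this hypothesis, but Lemma \ref{Lemma2} genuinely requires it, since for null electromagnetic fields the conclusion $\pounds_\xi F_{ij}=0$ can fail even when $\xi^{i}$ is Killing. I would therefore make explicit at the end that the non-nullness hypothesis enters only through the electromagnetic part, so no further assumption is needed on the curvature side.
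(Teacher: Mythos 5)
Your argument is logically sound, but it is not the route the paper takes. You derive the result as an immediate corollary of the preceding theorem (motion $\Rightarrow$ M-projective curvature collineation on any $(V_4,g)$), adding Lemma \ref{Lemma2} only to record that the same Killing vector also preserves $F_{ij}$; as you yourself note, the non-nullness hypothesis then plays no role in the curvature conclusion, and indeed the invariance of $F_{ij}$ is not needed to establish $\pounds_\xi \bar{W}^h_{ijk}=0$, so your final ``combination'' step is really superfluous bookkeeping. The paper instead makes the electromagnetic structure do genuine work: it substitutes the non-null energy--momentum tensor \eqref{3.6} through the Einstein equations \eqref{3.7} into the decomposition \eqref{3.5} of $\bar{W}^h_{ijk}$ in terms of the Weyl conformal tensor, obtaining the explicit expression \eqref{3.8} in which the Ricci contributions are replaced by quadratic expressions in $F_{ij}$; it then applies Lemma \ref{Lemma1} (motion implies Weyl conformal collineation, so $\pounds_\xi C^h_{ijk}=0$) and Lemma \ref{Lemma2} ($\pounds_\xi F_{ij}=0$) termwise to conclude. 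What your approach buys is brevity and the observation that the hypothesis of non-nullness is logically redundant for the stated conclusion; what the paper's approach buys is an explicit exhibition of how the electromagnetic field enters the M-projective tensor, which is why both the non-null form of $T_{ij}$ and Lemma \ref{Lemma2} are essential there. Both arguments rest on the same underlying inputs (the Killing hypothesis and the collineation lemmas), so your proof is acceptable, but you should be aware that it bypasses the computation the authors actually intend.
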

\begin{proof}
The energy momentum tensor $T_{ij}$ for a non-null electromagnetic field is written as

\begin{equation}\label{3.6}
	T_{ij} =   F_{im} F^{m}_{j}  - \frac{1}{4}   g_{ij} F^{ab}F_{ab}. 
\end{equation} 
Considering the purely electromagnetic distribution of the Einstein field equations
\begin{equation}\label{3.7}
	R_{ij} =   \kappa T_{ij},  
\end{equation} 
where $\kappa$ is gravitational constant. Equation \eqref{3.5}, which incorporates equations \eqref{3.7} and \eqref{3.6}, yields

\begin{eqnarray}\label{3.8}
	\bar{W}^h_{ijk} =    C^h_{ijk}  -  \frac{R }{6} (  \delta^h_{k} g_{ij} - \delta^h_{j} g_{ik} )  +  \frac{\kappa}{3} (F_{im} F^{m}_{j} \delta^h_{k} - F_{im} F^{m}_{k} \delta^h_{j}  + g_{ij} F^h_{m} F^{m}_{k}  -  g_{ik} F^h_{m} F^{m}_{j} ) \notag\\
	-  \frac{\kappa}{12} ( g_{ij} \delta^h_{k}  - g_{ik} \delta^h_{j} ) F_{ab} F^{ab}   -   \frac{\kappa}{12} ( g_{ij} g^h_{k}   -  g_{ik} g^h_{j}) F_{ab} F^{ab}.
\end{eqnarray}
Applying Lemmas \ref{Lemma1} and \ref{Lemma2} in equation \eqref{3.8}, then establish the Theorem. 
\end{proof}

Equations \eqref{3.4}, \eqref{3.5}, and \eqref{3.6} as well as Lemmas \ref{Lemma1} and \ref{Lemma2} can also yield a similar conclusion.
\vspace{6pt}
\par
For a null electromagnetic, the energy-momentum tensor is given by
\begin{equation}\label{3.9}
	T_{ij} =   F_{in} F^{n}_{j},  
\end{equation} 
where  $ F_{ij} = s_i t_j - t_i s_j$ and $s^i s_i = s^i t_ i = 0$, $t^i t_i = 1$. The propagation and polarization vectors are $s$ and $t$, respectively.
As a result of applying equations \eqref{3.7} and \eqref{3.9} to equation \eqref{3.5}, we obtain
\begin{eqnarray}\label{3.10}
	\bar{W}^h_{ijk} =    C^h_{ijk}  -  \frac{R }{6} (  \delta^h_{k} g_{ij} - \delta^h_{j} g_{ik} )  +  \frac{\kappa}{3} (F_{in} F^{n}_{j} \delta^h_{k} - F_{in} F^{n}_{k} \delta^h_{j}\notag\\
	  + g_{ij} F^h_{n} F^{n}_{k}  -  g_{ik} F^h_{n} F^{n}_{j}). 
\end{eqnarray}
The fact that,

\begin{lemma}\label{Lemma3}
\cite{Zahsan2}	If the propagation (polarization) vector is Killing and expansion-free, then  a null electromagnetic field admits Maxwell collineation along the propagation (polarization) vector.
\end{lemma}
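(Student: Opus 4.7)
The plan is to verify $\pounds_\xi F_{ij} = 0$ directly from the null decomposition $F_{ij} = s_i t_j - t_i s_j$ by distributing the Lie derivative and showing that each resulting piece vanishes under the hypotheses. By the Leibniz rule,
\begin{equation*}
\pounds_\xi F_{ij} = (\pounds_\xi s_i)\,t_j + s_i\,(\pounds_\xi t_j) - (\pounds_\xi t_i)\,s_j - t_i\,(\pounds_\xi s_j),
\end{equation*}
so the task reduces to evaluating $\pounds_\xi s_i$ and $\pounds_\xi t_j$. I would treat the propagation case $\xi = s$ and the polarization case $\xi = t$ in parallel, simply interchanging the roles of $s$ and $t$.

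The first step is to dispose of the self-term: for any Killing $\xi$, $\pounds_\xi \xi_i = 0$ automatically, since expanding $\pounds_\xi \xi_i = \xi^k \xi_{i;k} + \xi_k \xi^k_{;i}$ and applying the Killing antisymmetry $\xi_{i;k} = -\xi_{k;i}$ makes the two summands cancel. In the propagation case this collapses the Leibniz expansion to $s_i(\pounds_s t_j) - (\pounds_s t_i)\,s_j = 2\,s_{[i}(\pounds_s t_{j]})$, and in the polarization case analogously to $-2\,t_{[i}(\pounds_t s_{j]})$. What remains is to show that the surviving Lie derivative is proportional to the corresponding Killing direction, so that antisymmetrization against that direction annihilates it.

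The principal obstacle is converting the expansion-free hypothesis into this proportionality. My strategy is to fix a null tetrad $(s, t, m, \bar m)$ adapted to the 2-plane spanned by $s$ and $t$ (with $m, \bar m$ complex null and transverse), and decompose $\pounds_s t_j = \alpha\, s_j + \beta\, t_j + \gamma\, m_j + \bar\gamma\,\bar m_j$. Each scalar component can be read off by contracting with the appropriate cotetrad element and invoking: (i) the normalizations $s^i t_i = 0$ and $t^i t_i = 1$ differentiated along $s$, (ii) the Killing equation on $s$, and (iii) the geometric content of \emph{expansion-free}, which I expect to encode the vanishing of the transverse optical scalars of the $s$-congruence. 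Steps (i)--(ii) force $\beta = 0$, and (iii) forces $\gamma = 0$, leaving only $\pounds_s t_j = \alpha\, s_j$. Substituting back yields $\pounds_s F_{ij} = 2\alpha\, s_{[i} s_{j]} = 0$, and the polarization statement follows by interchanging $s \leftrightarrow t$ and replacing the use of $s^is_i = 0$ with that of $t^it_i = 1$. The delicate step is precisely the translation of ``expansion-free'' into an algebraic vanishing statement for the appropriate tetrad components of the covariant derivative of the Killing vector; once that identification is in hand, the rest of the argument is direct substitution.
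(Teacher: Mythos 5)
Your proposal cannot be checked against an in-paper argument, because the paper does not prove this lemma at all: it is imported verbatim from Ahsan's 1978 paper (reference \cite{Zahsan2}), where the proof proceeds by computing $\pounds_\xi F_{ij}=F_{ij;k}\xi^k+F_{kj}\xi^k_{;i}+F_{ik}\xi^k_{;j}$ and invoking the source-free Maxwell equations for the null field together with the Killing and expansion-free hypotheses. The kinematic parts of your sketch are fine: $\pounds_\xi\xi_i=0$ for a Killing $\xi$ is correct, and the $t$-component of $\pounds_s t_j$ vanishes because $\pounds_s g^{ij}=0$ and $t^it_i=1$, while the component along the second null direction vanishes because $s^it_i=0$ and $\pounds_s s^i=0$.

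The genuine gap is your step (iii). Nowhere do you use the fact that $F_{ij}$ is an electromagnetic field, i.e.\ satisfies $F^{ij}{}_{;j}=0$ and $F_{[ij;k]}=0$, and without these the transverse component of $\pounds_s t_j$ does not vanish: in a real half-null frame $(s,n,t,u)$ with $s^in_i=-1$ it equals $u^j(s^kt_{j;k}+t^ks_{k;j})$, and the term $u^js^kt_{j;k}$ measures how the polarization vector varies along the rays, which is not an optical scalar of the $s$-congruence and is not constrained by ``$s$ Killing and expansion-free.'' Concretely, in flat spacetime with $ds^2=-2\,du\,dv+dx^2+dy^2$ take $s=\partial_v$ (covariantly constant, hence Killing and expansion-free) and $t_i\,dx^i=\cos v\,dx+\sin v\,dy$: then $s$ is null, $t$ is unit spacelike and orthogonal to $s$, all of your hypotheses hold, yet $\pounds_sF_{ij}\neq0$ for $F_{ij}=s_it_j-t_is_j$. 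This $F$ is of course not a Maxwell field, and that is precisely the missing ingredient: it is the null-field Maxwell equations (which make the congruence geodesic and shear-free and tie the propagation of $t$ along $s$ to the expansion and twist) that, combined with the stated hypotheses, annihilate the transverse term. Two secondary slips: $(s,t,m,\bar m)$ with $m,\bar m$ orthogonal to both $s$ and $t$ is not a frame, since the orthogonal complement of the degenerate plane spanned by $s$ and $t$ contains $s$ itself, so you need a second null vector $n$ with $s^in_i\neq0$; and the propagation and polarization cases are not related by simply interchanging $s\leftrightarrow t$, since $s$ is null while $t$ is unit spacelike, so the second case requires its own parallel computation.
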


The Lie derivative of M-projective curvature tensor with regard to propagation (polarization) vector vanishes from equation \eqref{3.10} and Lemmas \ref{Lemma1} and \ref{Lemma3}. As a result, we may state that

\begin{theorem}
If a propagation (polarization) vector is Killing and expansion-free, then a null electromagnetic field admits M-projective curvature collineation along a vector $\xi$.	
\end{theorem}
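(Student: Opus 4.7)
The plan is to start from the structural decomposition \eqref{3.10}, which already writes $\bar{W}^h_{ijk}$ for a null electromagnetic field as a sum of the Weyl conformal tensor, a scalar-curvature term built only from $R$ and $g$, and several quadratic contractions of the field tensor $F_{ij}$. Because each summand is built from just three ingredients, $C^h_{ijk}$, the metric $g$ (and hence $R$), and $F$, the whole argument reduces to showing that the Lie derivative of each ingredient along $\xi$ vanishes under the stated hypotheses. I would then conclude by distributing $\pounds_\xi$ over \eqref{3.10}.

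First I would observe that a propagation (or polarization) vector that is Killing is, by definition, a vector along which the metric is Lie-invariant, so $\pounds_\xi g_{ij}=0$, and consequently $\pounds_\xi g^{ij}=0$, $\pounds_\xi \delta^h_k=0$, and $\pounds_\xi R =0$. Lemma \ref{Lemma1} then supplies $\pounds_\xi C^h_{ijk}=0$ (via W conf C) and $\pounds_\xi R^h_{ijk}=0$ (via CC), which together disposes of the first two terms on the right-hand side of \eqref{3.10}. Next, the expansion-free and Killing hypotheses on the propagation (polarization) vector are exactly the premises of Lemma \ref{Lemma3}, which gives a Maxwell collineation: $\pounds_\xi F_{ij}=0$. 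Raising an index with $g$ and applying the Leibniz rule for the Lie derivative, each of the quadratic terms $F_{in}F^n{}_j$, $F^h{}_n F^n{}_k$, etc., also has vanishing Lie derivative.

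The final step is routine: apply $\pounds_\xi$ to \eqref{3.10} term by term and use the Leibniz rule on each product of tensors, as in \eqref{3.2n}. All contributions cancel, yielding $\pounds_\xi \bar{W}^h_{ijk}=0$, which by Definition of M-projective curvature collineation \eqref{3.1} is the desired conclusion for both the propagation and polarization vectors.

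The main obstacle I anticipate is not conceptual but bookkeeping: one must verify that the Killing property of a null vector really gives all of $\pounds_\xi g_{ij}=0$, $\pounds_\xi R=0$, and (through Lemma \ref{Lemma1}) $\pounds_\xi C^h_{ijk}=0$ simultaneously, since the hypothesis in Lemma \ref{Lemma3} is phrased in terms of the propagation or polarization vector alone, while Lemma \ref{Lemma1} is phrased in terms of a general motion on $V_4$. I would address this by noting that ``$\xi$ is Killing'' already means $\xi$ generates a motion on $V_4$, so Lemma \ref{Lemma1} applies directly to $\xi$ itself, and the extra expansion-free condition is only needed to invoke Lemma \ref{Lemma3}.
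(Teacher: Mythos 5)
Your proposal is correct and follows essentially the same route as the paper: the paper likewise takes the Lie derivative of the decomposition \eqref{3.10} and invokes Lemma \ref{Lemma1} for the $C^h_{ijk}$ and scalar-curvature terms together with Lemma \ref{Lemma3} for the quadratic $F$-terms, concluding $\pounds_\xi \bar{W}^h_{ijk}=0$. Your write-up merely spells out the Leibniz-rule bookkeeping that the paper leaves implicit.
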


\begin{remark}
As M-projective curvature tensor may be written in terms of other curvature tensors like concircular curvature tensor and conharmonic curvature tensor, several equivalent conclusions can be achieved for \cite{AhsanSA} M-projective curvature collineation.
\end{remark}

\section{ Conclusion}
This study has made an effort to look at the connection between the divergence of the M-projective curvature tensor and other curvature tensors, including projective, conformal, concircular, conharmonic, and W-curvature curvature tensor. Additionally, we have explored the circumstances in which the spacetime of general relativity may admit M-projective collineation, as well as introducing the idea of M-projective collineation. There were instances of a non-null and a null electromagnetic field. The Conformal Killing vector field has many applications in  astrophysics, like star modeling, wormhole modeling etc \cite{CKV,CKV1,CKV2,CKV3}. Similarly, if we  can write curvature collineation, curvature inheritance, and Ricci inhertance etc in terms of local coordinates, that is,  in spherically symmetric   static spacetime, then, several astrophysical models can be discussed. May be that will be our  future  project.

\par
\vspace{6pt}
{\bf Acknowledgments}\quad Farook Rahama and Naeem Ahmad Pundeer would like to thank the authorities of the Inter-University Centre for Astronomy and Astrophysics, Pune, India for providing research facilities. Also, the fourth author is supported by U.G.C Dr. D.S. Kothari Post Doctoral fellowship No. F.4-2/2006 (BSR)/MA/20-21/0069.

\end{document}